\def\usingarxiv{false}
\newcites{app}{Appendix References}
\newcommand{\arxiv}[2]{\ifbool{\usingarxiv}{#1}{#2}} % checks if we are compiling for arxiv or not
\newcommand{\norm}[1]{\rVert #1\lVert}
\newcommand{\innerp}[1]{\langle #1 \rangle}
{
    \theoremstyle{plain}
    \newtheorem{assumption}{Assumption}
    \newtheorem{proposition}{Proposition}
    \newtheorem{theorem}{Theorem}
    \newtheorem{lemma}{Lemma}
    \newtheorem*{remark}{Remark}
    
    \newtheorem{definition}{Definition}
}
\DeclareMathOperator*{\diag}{diag}
\newcommand\fs@betterruled{%
  \def\@fs@cfont{\bfseries}\let\@fs@capt\floatc@ruled
  \def\@fs@pre{\vspace*{7pt}\hrule height.8pt depth0pt \kern2pt}%
  \def\@fs@post{\kern2pt\hrule\relax}%
  \def\@fs@mid{\kern2pt\hrule\kern2pt}%
  \let\@fs@iftopcapt\iftrue}
\title{\LARGE \bf Bilevel Optimization for Real-Time Control \\ with Application to Locomotion Gait Generation}
\author{
    Zachary Olkin$^{1}$ and Aaron D. Ames$^{1}$%
    \thanks{This research is supported by Technology Innovation Institute (TII).}
    \thanks{$^{1}$Authors are with the Department of Control and Dynamical Systems, California Institute of Technology, Pasadena CA 91125, U.S.A. \texttt{\{zolkin, ames\}@caltech.edu}.}
}
\date{February 2024}
\begin{document}
 
\maketitle

\thispagestyle{empty}
\pagestyle{empty}

\renewcommand\qedsymbol{$\blacksquare$}

\begin{abstract}
    Model Predictive Control (MPC) is a common tool for the control of nonlinear, real-world systems, such as legged robots. However, solving MPC quickly enough to enable its use in real-time is often challenging.  
    One common solution is given by real-time iterations, which does not solve the MPC problem to convergence, 
    but rather close enough to give an approximate solution. In this paper, we extend this idea to a bilevel control framework where a ``high-level" optimization program modifies a controller parameter of a ``low-level" MPC problem which generates the control inputs and desired state trajectory. We propose an algorithm to iterate on this bilevel program in real-time and provide conditions for its convergence and improvements in stability. We then demonstrate the efficacy of this algorithm by applying it to a quadrupedal robot where the high-level problem optimizes a contact schedule in real-time. We show through simulation that the algorithm can yield improvements in disturbance rejection and optimality, while creating qualitatively new gaits. 
\end{abstract}

\section{Introduction}
Model Predictive Control (MPC) has become an increasingly common method for control as computation has rapidly improved over the past few decades. MPC is a control strategy where an optimization problem is solved online that yields a control input and state trajectory which minimize a given cost function over a fixed finite time horizon. In the MPC problem there may exist control or constraint parameters that can be changed or tuned over time. In this paper, we consider the problem of updating these parameters in real-time through a bilevel optimization framework.

Solving MPC to convergence in real time is difficult for many systems. In response, real-time iterations \cite{diehl_efficient_2002, diehl_real-time_2005} only approximately solve the MPC program with Quadratic Program (QP) approximations at each time step, warm-starting the next solve with the prior solution. In practice, this allows solutions to converge across iterations while retaining real-time performance. Later work has even extended this by only partially solving the above QP approximation with the aim of maximizing the feedback rate at the expense of convergence at each iteration \cite{wang_fast_2010, feller_stabilizing_2016}. For dynamic applications like legged locomotion, these methods are common and performant on real systems \cite{sleiman_unified_2021, grandia_perceptive_2022, csomay-shanklin_nonlinear_2023}.

% START OF RELATED WORK %
Methods for solving generic bilevel optimization programs include transforming the bilevel program into a single level, penalty function methods, and descent methods \cite{sinha_review_2020}. Single level and penalty function approaches cannot be used here as we require an (approximate) solution to the low-level program at each time instant. This paper considers a novel real-time approach using the ideas of descent methods.

A use case for the presented bilevel algorithm occurs in the control of legged robots, which will be the focus of a simulation example later in the paper. In legged locomotion, the robot must use a contact schedule, which we define as the collection of contact timings and contact sequences of the legs, to move about the environment. Determining this contact schedule is challenging in general due to the nonlinear and hybrid nature of the system. In this work, we utilize the observation that every foot is in a binary state with a given surface: in contact or out of contact. Therefore, we can parameterize the contact schedule using the contact times of each foot. Rather than solving for these timings in the MPC problem, we can instead treat them as a parameter to the MPC problem and update them separately in a high-level optimization. This allows us to preserve the speed and reliability of the underlying MPC problem while still modifying the gait in real-time.

\begin{figure} [t]
    \centering
    \includegraphics[scale=1.02]{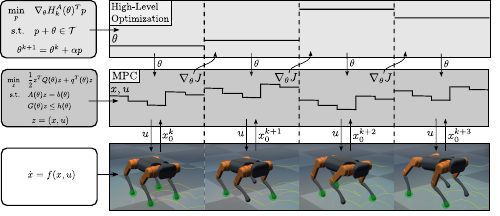}
    \caption{
        Structure of the bilevel optimization. The MPC uses parameters from the high-level optimization and outputs the control inputs and state trajectory. When applied to the quadruped, the high-level parameter is the contact schedule. The green dots indicate contact with the ground and show how the contact schedule changes over time.
    }
    \label{fig:big-idea}
    \vspace{-0.5cm}
\end{figure}

Other approaches to the online contact schedule generation problem include ``Contact-Implicit" MPC (CIMPC) \cite{kong_hybrid_2022, kurtz_inverse_2023, cleach_fast_2023}. In CIMPC, the contact schedule is determined implicitly either through choice of model/contact (i.e. smoothed contact dynamics), or algorithm (i.e. hybrid iLQR). Smoothing the contact dynamics leads to a trade off between the numerical conditioning of the problem and physical realism, \cite{wensing_optimization-based_2022} which this paper's method circumvents. Some CIMPC approaches \cite{carius_trajectory_2018, cleach_fast_2023} solve bilevel optimization problems where the contact is resolved via another optimization problem, which is used as part of the dynamics in a normal MPC formulation. This is structurally different from the algorithm presented here, since we optimize the contact schedule independently from the choice of contact dynamics.

We propose a real-time iteration scheme for a bilevel optimization problem involving a low-level MPC controller as depicted in Fig. \ref{fig:big-idea}. Theoretical properties of the algorithm such as convergence and stability improvements are proven. Further, we illustrate how the algorithm can be used as a novel approach to the online gait optimization problem for legged robots. Finally, we provide simulation results that demonstrate improvements in disturbance rejection and optimality as well as the generation of qualitatively different gaits while walking. We achieve MPC loop rates of around 90 Hz for a quadruped, which is competitive with current state of the art contact schedule generation approaches. 

\section{Algorithm and Theoretical Properties} \label{sec:theory}
In this section, we show that the proposed bilevel algorithm is theoretically justified under nominal conditions.

\subsection{Problem Setup}
Let $N$ denote the number of time steps (nodes) in the time horizon, $x \in \mathbb{R}^{n\times N}$ the state trajectory, $u \in \mathbb{R}^{m\times N}$ the input trajectory, $f$ the discrete-time dynamics, $\mathcal{C}$ the constraint set, and $L$ the low-level cost function. Throughout, we use subscripts to denote a node on a trajectory, e.g., $x_i$ and $u_i$ and a superscript to denote an iterate in time, e.g., $\theta^k$. Consider the parameterized low-level MPC program
\begin{align}
\label{eq:ll_opt}
    \min_{x, u}& \quad L(x, u, \theta) \notag \\
    \text{s.t.}& \quad (x, u) \in \mathcal{C}(\theta), \tag{L-MPC} \\
    & \quad  x_{i+1} = f(x_i, u_i, \theta), \; \forall k < N \notag,
\end{align}
where $\theta \in \mathbb{R}^c$ are the MPC program parameters. The high-level problem is given by
\begin{align}
    \label{eq:hl_opt}
    \min_{\theta}& \quad H(x^*,u^*,\theta) \notag \\
    \text{s.t.} &\quad  \theta \in \mathcal{T}, \\
    &\quad (x^*, \; u^*) \textrm{ is optimal for } \eqref{eq:ll_opt}, \notag
\end{align}
where $H$ gives the high-level cost and $\mathcal{T}$ is the constraint set for the parameter.
As both problems are in general non-convex and potentially large, solving this bilevel problem to convergence in real-time for dynamic systems is impractical. Thus, we consider the real-time iteration scheme presented below. We show that under certain assumptions, the proposed iteration scheme can converge and even improve stability when the low-level MPC is solved using QP approximations, and the high-level problem is solved by taking gradient-based steps and a line-search.  

Given a fixed $\theta$, an initial condition, and an initial trajectory, we can create the following QP by linearizing the dynamics and constraints and taking a quadratic approximation of the cost about the initial trajectory:

\begin{align}
\label{eq:mpc_qp}
\min_{x, u}& \quad \frac{1}{2}\begin{bmatrix}
    x \\ u
\end{bmatrix}^TQ(\theta)\begin{bmatrix}
    x \\ u
\end{bmatrix} + q^T(\theta)\begin{bmatrix}
    x \\ u
\end{bmatrix} \notag \\
    \text{s.t.}& \quad  A(\theta)\begin{bmatrix}
        x \\ u
    \end{bmatrix} = b(\theta) \tag{A-MPC} \\
     & \quad G(\theta)\begin{bmatrix}
         x \\ u
     \end{bmatrix} \leq h(\theta) \notag
\end{align}
$A$ is the equality constraint matrix, $G$ is the inequality constraint matrix, and $b$ and $h$ are vectors defining the equality and inequality constraints respectively. $q$ and $Q$ are the gradient and the Hessian of the cost function respectively. We let $\Omega:=\{A,G,Q,b,h,q\}\subset\mathbb{R}^p$ and denote a specific choice of these parameters as $\omega \in \Omega$.

If $Q(\theta)$ is positive semidefinite, then \eqref{eq:mpc_qp} is a convex QP and can typically be solved quickly for real systems. Note \eqref{eq:mpc_qp} refers to the approximate MPC problem whereas \eqref{eq:ll_opt} refers to the true low-level MPC problem.

\begin{assumption}
\label{assumption:costs}
    The high-level cost satisfies $H(x, u, \theta) = L(x, u, \theta)$.
\end{assumption}

\begin{definition}
    Let $s \in \mathbb{R}^n$ be a state and $s^*$ be a fixed point of a discrete dynamical system. The fixed point is stable if
    \begin{equation}
    \label{eq:stability}
        \forall \epsilon > 0, \; \exists \delta \; \text{s.t.} \; \norm{s^{0} - s^*} < \delta \Rightarrow \norm{s^k - s^*} < \epsilon, \; \forall k > 0,
    \end{equation}
    and asymptotically stable if for some $\delta$,
    \begin{equation}
    \label{eq:asymp_stability}
        \norm{s^{0} - s^*} < \delta \Rightarrow \lim_{k \rightarrow \infty}\norm{s^k - s^*} = 0.
    \end{equation}
\end{definition}
Let $J^L(x_0^k, \theta)$ denote the minimum value of \eqref{eq:ll_opt} and $J^A(x_0^k, \theta)$ the minimum value of \eqref{eq:mpc_qp}, where $x_0^k$ is the initial condition at the $k$th time step.
\begin{assumption}
\label{assumption:mpc_stability}
There is a set of parameters, $\Theta \subseteq \mathbb{R}^c$, such that the low-level MPC controller \eqref{eq:mpc_qp} is stabilizing with a region of attraction $\mathcal{A} \subset \mathbb{R}^n$. Moreover, $J^L(x_0^k, \theta)$ is a discrete-time Lyapunov function certifying stability for all $\theta \in \Theta$ (in the sense of \eqref{eq:stability}) to the origin of the system. Thus, the following are satisfied:
\begin{align}
     \label{eq:lyap-1}
     J^L(x_0^k, \theta) &> 0, \quad \forall x_0^k \in \mathcal{A} \setminus 0, \\
     J^L(0, \theta) &= 0, \\
     \label{eq:layp-2}
     J^L(x_0^{k+1}, \theta) - &J^L(x_0^{k}, \theta) \leq 0. 
\end{align}
\end{assumption}

The high-level problem is solved throughout time by using the following update at each time step
\begin{align}
\label{eq:hl-step}
    \theta^{k+1} = \theta^k + \alpha^k p^k
\end{align}
\noindent where $\alpha^k$ gives the scaling of the step and $p^k$ is the step direction. The choice of step direction is based on gradient information from the low-level MPC problem \eqref{eq:mpc_qp}. To get this gradient information, we will differentiate through the solution of the QP.

\begin{proposition} 
\label{prop:qp_grad}
    Let the QP subproblem associated with $\omega(\theta) \in \Omega$ be denoted as 
    \begin{align}
    \label{eq:qp_subprob}
    \min_{z}& \quad \frac{1}{2}z^TQ(\theta)z + q^T(\theta)z \notag \\
        \text{s.t.}& \quad  A(\theta)z = b(\theta)  \\
         & \quad G(\theta)z \leq h(\theta) \notag
    \end{align}
    where $\theta$ is a parameter. Let $J$ denote the optimal cost. If at the optimal solution $(z^*, \lambda^*, \nu^*)$ (with Lagrange multipliers $\lambda^*$ and $\nu^*$) $\omega$ is smooth with respect to $\theta$; $Q \succ 0$; the Linear Independence Constraint Qualification (LICQ)\footnote{See definition 12.4 in \cite{nocedal_numerical_2006}.} condition holds; and the set $\{i \; | \; \lambda_i^* = 0 \; \text{and} \; (G(\theta)z^* - h(\theta))_i = 0\}$ is empty;
    then the partial derivative of $J$ with respect to a parameter element $\theta_j$ is given by 

    \begin{multline}
    \label{eq:cost_partial}
        \frac{\partial J}{\partial \theta_j} = (Q(\theta)z^* + q(\theta))^T\frac{\partial z^*}{\partial \theta_j} + \frac{1}{2}z^{*T}\frac{\partial Q}{\partial \theta}z^* + \frac{\partial q^T}{\partial \theta_j} z^*,
    \end{multline}
    where in particular,
    \begin{multline}
    \label{eq:imp_fcn_partial}
        \frac{\partial z^*}{\partial \theta_j} = -\begin{bmatrix}
            Q & G^T & A^T \\
            \diag(\lambda^*)G & \diag(Gz^* - h) & 0 \\
            A & 0 & 0
        \end{bmatrix}^{-1} \\
        \times \begin{bmatrix}
            \frac{\partial Q}{\partial \theta_j} z^* + \frac{\partial q}{\partial \theta_j} + \frac{\partial G^T}{\partial \theta_j} \lambda^* + \frac{\partial A^T}{\partial \theta_j}\nu \\
            \diag(\lambda^*)(\frac{\partial G}{\partial \theta_j}z^* - \frac{\partial h}{\partial \theta_j}) \\
            \frac{\partial A}{\partial \theta_j} z^* - \frac{\partial b}{\partial \theta_j}
        \end{bmatrix}
    \end{multline}
    with the dependence on $\theta$ suppressed for notational clarity.
\end{proposition}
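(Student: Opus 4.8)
The plan is to derive both formulas from the Karush--Kuhn--Tucker (KKT) system of \eqref{eq:qp_subprob} together with the Implicit Function Theorem (IFT). First I would form the Lagrangian $\mathcal{L}(z,\lambda,\nu) = \tfrac12 z^T Q z + q^T z + \lambda^T(Gz - h) + \nu^T(Az - b)$ and record the KKT conditions: stationarity $Qz^* + q + G^T\lambda^* + A^T\nu^* = 0$, primal feasibility, dual feasibility $\lambda^* \geq 0$, and complementary slackness $\diag(\lambda^*)(Gz^* - h) = 0$. Because $Q \succ 0$ and LICQ holds, these conditions are necessary and sufficient for $(z^*,\lambda^*,\nu^*)$ to be the unique optimizer, which licenses treating $z^*,\lambda^*,\nu^*$ as functions of $\theta$.

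The cost formula \eqref{eq:cost_partial} then follows by directly differentiating $J(\theta) = \tfrac12 z^*(\theta)^T Q(\theta) z^*(\theta) + q(\theta)^T z^*(\theta)$ with respect to $\theta_j$ via the chain and product rules, using the symmetry of $Q$ to combine the two mixed terms into $(Qz^* + q)^T \partial z^*/\partial \theta_j$. (One could instead invoke the envelope theorem and the stationarity condition to rewrite this leading term, but the direct computation yields precisely the stated expression.) For \eqref{eq:imp_fcn_partial} I would collect the stationarity, complementary-slackness, and equality-feasibility equations into a single residual map $F(z,\lambda,\nu,\theta)=0$ whose three blocks are $Qz + q + G^T\lambda + A^T\nu$, $\diag(\lambda)(Gz - h)$, and $Az - b$. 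Differentiating $F=0$ implicitly in $\theta_j$ and solving for $\partial(z^*,\lambda^*,\nu^*)/\partial\theta_j$ produces exactly the displayed inverse-times-vector expression: the Jacobian $\partial F/\partial(z,\lambda,\nu)$ is the $3\times 3$ block matrix, the explicit partials $\partial F/\partial\theta_j$ form the right-hand vector, and $\partial z^*/\partial\theta_j$ is read off as the top block of the product. Carrying out this block differentiation of $F$ is routine.

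The main obstacle is justifying that the IFT applies, i.e.\ that the KKT Jacobian is nonsingular; this is exactly where the three remaining hypotheses are used. I would verify it by a null-space argument: suppose $(dz,d\lambda,d\nu)$ lies in the kernel. Strict complementarity (the empty-set condition) forces, componentwise from the second block equation $\diag(\lambda^*)G\,dz + \diag(Gz^*-h)\,d\lambda = 0$, that $(G\,dz)_i = 0$ on active constraints (where $\lambda_i^* > 0$) and $d\lambda_i = 0$ on inactive constraints (where $(Gz^*-h)_i < 0$). Left-multiplying the first block by $dz^T$ and using $A\,dz = 0$ from the third block, the cross terms $(G\,dz)^T d\lambda$ and $(A\,dz)^T d\nu$ vanish by these facts, leaving $dz^T Q\,dz = 0$; since $Q \succ 0$ this gives $dz = 0$. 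The first block then reduces to $G_{\mathcal{A}}^T d\lambda_{\mathcal{A}} + A^T d\nu = 0$, where $G_{\mathcal{A}}$ collects the active rows, and LICQ (linear independence of the active inequality gradients together with the equality gradients) forces $d\lambda_{\mathcal{A}} = 0$ and $d\nu = 0$. Hence the kernel is trivial, the Jacobian is invertible, and the IFT delivers \eqref{eq:imp_fcn_partial}, completing the derivation.
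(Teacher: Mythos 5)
Your proof is correct and takes essentially the same route as the paper: the paper's one-line proof simply invokes the chain rule together with \cite{barratt_differentiability_2019} and \cite{amos_optnet_2021}, which apply the implicit function theorem to the KKT conditions of \eqref{eq:qp_subprob} exactly as you do. The only difference is that you carry out in full what the paper delegates to those references—most notably the null-space argument establishing nonsingularity of the KKT Jacobian, which makes explicit where $Q \succ 0$, LICQ, and strict complementarity are each used.
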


\begin{proof}
    The proof follows from the chain rule and the results of \cite{barratt_differentiability_2019} and \cite{amos_optnet_2021}, which apply the implicit function theorem to the KKT conditions of \eqref{eq:qp_subprob}.
\end{proof}
We can form $\nabla_\theta J^{A}$ as a vector of partial derivatives computed via Prop. \ref{prop:qp_grad}. Note the matrix inversion in \eqref{eq:imp_fcn_partial} can be calculated once for a given $\theta$ and reused for each element of $\nabla_\theta J^{A}$.

If \eqref{eq:mpc_qp} coincides with the QP approximation of \eqref{eq:ll_opt} at the minimizer, then the gradients will match the true gradients of the MPC subproblem. However, we can only expect that \eqref{eq:mpc_qp} is a good approximation of the true MPC problem. Since the gradients of \eqref{eq:mpc_qp} are what we can compute in real-time, we show how to leverage them.

\subsection{Main Result}

Denote the program parameters of the QP approximation of the true MPC problem at the minimizer by $\omega^* \in \Omega$. Let $\innerp{\cdot , \cdot}$ represent the inner product.
\begin{lemma}
\label{lemma:orthog}
Assume $\norm{\nabla_{\theta} J^L(x_0^j, \theta^k)}$, the 2-norm of the optimal cost gradient of the true MPC, is nonzero. If $\nabla_{\theta}J^L(x_0^j, \theta^k)$ is continuous with respect to $\omega^* \in \Omega$, then there exists a ball of radius $\delta$ about 
$\omega^*$ such that if $\norm{\omega - \omega^*}^2 \leq \delta$, then $\innerp{\nabla_{\theta}J^{A}(x_0^j, \theta^k), \nabla_{\theta}J^L(x_0^j, \theta^k)} > 0$.
\end{lemma}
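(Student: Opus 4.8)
The plan is to reduce the claim to the elementary fact that a continuous real-valued function which is strictly positive at a point stays positive on a neighborhood of that point. The key structural observation is that both $\nabla_{\theta} J^A$ and $\nabla_{\theta} J^L$ are values of the \emph{same} gradient formula from Proposition~\ref{prop:qp_grad}, evaluated at different QP parameters. Writing $g(\omega)$ for the gradient that Proposition~\ref{prop:qp_grad} assigns to a parameter choice $\omega \in \Omega$, we have $\nabla_{\theta} J^A(x_0^j, \theta^k) = g(\omega)$, while $\nabla_{\theta} J^L(x_0^j, \theta^k) = g(\omega^*)$: by definition $\omega^*$ is the QP approximation of the true MPC at its minimizer, where the remark preceding the lemma guarantees that the approximate and true cost gradients coincide. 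Evaluating the target inner product at the base point $\omega = \omega^*$ then collapses it to $\innerp{g(\omega^*), g(\omega^*)} = \norm{\nabla_{\theta} J^L(x_0^j, \theta^k)}^2$, which is strictly positive by the standing assumption that this gradient is nonzero.

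Next I would define the scalar function $\phi(\omega) := \innerp{g(\omega), \nabla_{\theta} J^L(x_0^j, \theta^k)}$, holding the second slot fixed at the true gradient. Because the lemma assumes the gradient depends continuously on $\omega$ near $\omega^*$, and the inner product against a fixed vector is continuous, $\phi$ is continuous at $\omega^*$. Invoking the $\epsilon$--$\delta$ definition of continuity with $\epsilon = \phi(\omega^*)/2 > 0$ produces a radius such that $\norm{\omega - \omega^*}^2 \leq \delta$ forces $\phi(\omega) > \phi(\omega^*)/2 > 0$, which is precisely the claimed positivity of $\innerp{\nabla_{\theta} J^A(x_0^j, \theta^k), \nabla_{\theta} J^L(x_0^j, \theta^k)}$.

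The routine continuity step is not where the difficulty lies; the real work is in establishing that $g$ is genuinely continuous at $\omega^*$ and that $g(\omega^*)$ equals the true MPC gradient. Both hinge on the regularity conditions of Proposition~\ref{prop:qp_grad} ($Q \succ 0$, LICQ, and strict complementarity) persisting at $\omega^*$, so that the KKT matrix inverted in \eqref{eq:imp_fcn_partial} remains nonsingular and, by continuity of matrix inversion together with sums and products, its inverse and hence $g$ vary continuously with $\omega$. I would flag this as the main obstacle, since the continuity of $g$ is assumed in the lemma statement but ultimately rests on these active-set and nondegeneracy conditions remaining stable under small perturbations of the QP data.
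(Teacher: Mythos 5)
Your proof is correct and takes essentially the same route as the paper's: both arguments reduce the claim to continuity of the QP-differentiation gradient in $\omega$ at $\omega^*$ (where it coincides with the true gradient $\nabla_{\theta}J^L$, as the discussion preceding the lemma asserts) combined with the hypothesis $\norm{\nabla_{\theta}J^L}^2 > 0$. The only difference is bookkeeping: the paper extracts positivity from the expansion $\norm{\nabla_{\theta} J^{A} - \nabla_{\theta} J^L}^2 = \norm{\nabla_{\theta} J^{A}}^2 + \norm{\nabla_{\theta} J^L}^2 - 2\innerp{\nabla_{\theta} J^{A}, \nabla_{\theta} J^L}$ and a suitable choice of $\epsilon$, while you apply the ``continuous and positive at a point implies positive nearby'' argument to $\phi(\omega) = \innerp{g(\omega), \nabla_{\theta}J^L}$ directly, which is if anything a cleaner write-up of the same idea.
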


\begin{proof}
    We omit the dependence on $(x_0^j, \theta^k)$ in the proof for readability. By definition of the 2-norm we have that 
    \begin{equation*}
        \norm{\nabla_{\theta} J^{A} - \nabla_{\theta} J^L}^2 = \norm{\nabla_{\theta} J^{A}}^2 + \norm{\nabla_{\theta} J^L}^2 - 2\innerp{\nabla_{\theta} J^{A}, \nabla_{\theta} J^L}.
    \end{equation*}
    By continuity of the gradient $\nabla_{\theta} J^L$ with respect to $\omega^*$ we have that for all $\epsilon > 0$, there exists a $\delta > 0$ such that $\norm{\omega - \omega^*} < \delta \Rightarrow \norm{\nabla_{\theta} J^{A} - \nabla_{\theta} J^L} < \epsilon$. Thus, we can choose $\epsilon > 0$ such that
    \begin{equation*}
        (\norm{\nabla_{\theta} J^{A}}^2 + \norm{\nabla_{\theta} J^L}^2 - \epsilon)/2 < \innerp{\nabla_{\theta} J^{A}, \nabla_{\theta} J^L}.
    \end{equation*}
    Since $\norm{\nabla_{\theta} J^L}^2 > 0$ by assumption, there exists a $\epsilon > 0$ such that $0 < \innerp{\nabla_{\theta} J^{A}, \nabla_{\theta}J^L}$. Thus from the continuity of $\nabla_{\theta} J^L$ with respect to $\omega^*$, there exists a $\delta > 0$ such that $0 < \innerp{\nabla_{\theta} J^{A}, \nabla_{\theta}J^L}$ is satisfied.
\end{proof}

Lemma \ref{lemma:orthog} tells us that there is a region in $\Omega$ where if \eqref{eq:mpc_qp} is ``close'' to the QP approximation of \eqref{eq:ll_opt} at the minimizer, then the gradient we compute is bounded away from orthogonality with the true gradient. This will be useful for providing us with a descent direction of the cost function.

Given a current initial condition, we can substitute the MPC constraint into the high-level cost function to make the cost become solely a function of $\theta$. Since the initial condition changes at each time step, the MPC problem also changes, so the cost function changes with each iteration as a function of $\theta^k$.
We denote this cost with MPC embedded with the notation $H_j(\theta^k) := H(x_0^j, \theta^k)$. Similarly, we will use $J^{(\cdot)}_j(\theta^k) := J^{(\cdot)}(x_0^j, \theta^k)$.

The equations
\begin{align}
    \label{eq:wolfe-1}
    &H_k(\theta^k + \alpha^k p^k) \leq H_k(\theta^k) + c_1\alpha^k\nabla_{\theta}H_k(\theta^k)^Tp^k, \\
    \label{eq:wolfe-2}
    &\nabla_{\theta}H_k(\theta^k + \alpha^k p^k)^T p^k \geq c_2 \nabla_{\theta}H_k(\theta^k)^Tp^k,
\end{align}
are called the Wolfe Conditions and will be used to determine the line search magnitude. Note that $\alpha^k > 0 \; \forall k$, $0 < c_1 < c_2 < 1$. See \cite{nocedal_numerical_2006} for a proof that we can always satisfy the Wolfe Conditions under the conditions presented here.

Note that to evaluate $H_j(\theta^k)$, and thus to perform the line search, we require an MPC solve. The line search can be parallelized, as discussed in the next section, to minimize its computational cost. To get the true cost function value, one would need to solve MPC to convergence. Note that the proposed iterative update is still cheaper than solving the entire bilevel problem to convergence at each time step, and thus better for bilevel real-time MPC than current methods. As will be shown below, we do not need to update the parameter after every MPC solve, so we can still use real-time iterations of the MPC at every other time step. In practice, we may choose to only perform one or two \eqref{eq:mpc_qp} solves as an approximation of the cost.

\begin{assumption}
    \label{assumption:technical_convergence}
    $\nabla_{\theta}J^L_k$ is Lipschitz continuous in $\theta$ for all $k$.
\end{assumption}

\begin{lemma}
    \label{thm:convergence}
    Let the conditions in Prop. \ref{prop:qp_grad}, Lemma \ref{lemma:orthog}, and Assumptions 
    \ref{assumption:costs}, \ref{assumption:mpc_stability}, and \ref{assumption:technical_convergence} hold. Let $x^0_0 \in \mathcal{A}$, $\Theta = \mathbb{R}^c$, and $\mathcal{T} = \mathbb{R}^c$. Further, let $\alpha^k$ satisfy the Wolfe Conditions, and $p^k = -\nabla_{\theta}J_k^A(\theta^k)$ be calculated using Prop. \ref{prop:qp_grad}. If at each iteration $\omega^k$ is in the ball presented in Lemma \ref{lemma:orthog}, then $\norm{\nabla_{\theta}J^L_k(\theta^k)}^2 \rightarrow 0$ as $k \rightarrow \infty$.
\end{lemma}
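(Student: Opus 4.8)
The plan is to recognize Lemma~\ref{thm:convergence} as an adaptation of the classical global convergence theorem for line-search methods (Zoutendijk's theorem, Thm.~3.2 in \cite{nocedal_numerical_2006}) to two nonstandard features of our setting. First, the search direction $p^k = -\nabla_\theta J^A_k$ is built from the \emph{approximate} gradient rather than the true gradient $\nabla_\theta J^L_k$. Second, the objective $J^L_k$ itself changes with $k$, since the initial condition $x_0^k$ evolves as the system moves. I would handle the first feature with Lemma~\ref{lemma:orthog}, which keeps $p^k$ a genuine descent direction for the true cost, and the second with the Lyapunov decrease \eqref{eq:layp-2} of Assumption~\ref{assumption:mpc_stability}, which lets me telescope function values across a changing objective; the hypotheses $\Theta = \mathcal{T} = \mathbb{R}^c$ guarantee every iterate $\theta^{k+1}$ retains the stabilizing Lyapunov property used below.

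First I would use Assumption~\ref{assumption:costs} to identify the line-search objective with the true optimal cost: embedding the low-level optimizer into $H$ gives $H_k(\theta) = J^L_k(\theta)$, so the Wolfe conditions \eqref{eq:wolfe-1}--\eqref{eq:wolfe-2} are conditions on $J^L_k$. Since $\omega^k$ lies in the ball of Lemma~\ref{lemma:orthog}, $\innerp{\nabla_\theta J^A_k, \nabla_\theta J^L_k} > 0$, hence $\nabla_\theta J^L_k(\theta^k)^T p^k = -\innerp{\nabla_\theta J^A_k, \nabla_\theta J^L_k} < 0$ and $p^k$ is a descent direction for $J^L_k$. Define $V_k := J^L_k(\theta^k)$. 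The Armijo condition \eqref{eq:wolfe-1} bounds $J^L_k(\theta^{k+1})$ above by $V_k + c_1\alpha^k \nabla_\theta J^L_k(\theta^k)^T p^k$, while the Lyapunov inequality \eqref{eq:layp-2} evaluated at $\theta^{k+1}$ along the state transition $x_0^k \to x_0^{k+1}$ gives $V_{k+1} = J^L(x_0^{k+1}, \theta^{k+1}) \le J^L(x_0^k, \theta^{k+1}) = J^L_k(\theta^{k+1})$. Chaining these yields $V_{k+1} \le V_k + c_1 \alpha^k \nabla_\theta J^L_k(\theta^k)^T p^k$, so $V_k$ is nonincreasing; being bounded below by $0$ from \eqref{eq:lyap-1} it converges, and telescoping gives $\sum_k \alpha^k\big(-\nabla_\theta J^L_k(\theta^k)^T p^k\big) < \infty$.

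With summability in hand I would run the standard Zoutendijk mechanics. The curvature condition \eqref{eq:wolfe-2} together with Lipschitz continuity of $\nabla_\theta J^L_k$ (Assumption~\ref{assumption:technical_convergence}, constant $L$) produces the lower bound $\alpha^k \ge \tfrac{1-c_2}{L}\,\big(-\nabla_\theta J^L_k(\theta^k)^T p^k\big)/\norm{p^k}^2$. Substituting this into the telescoped sum converts it into the Zoutendijk-type bound $\sum_k \cos^2\phi_k\,\norm{\nabla_\theta J^L_k(\theta^k)}^2 < \infty$, where $\cos\phi_k = -\nabla_\theta J^L_k(\theta^k)^T p^k / (\norm{\nabla_\theta J^L_k}\,\norm{p^k})$ is the cosine of the angle between $p^k$ and steepest descent for the true cost. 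In particular the summand tends to $0$.

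The main obstacle is the final step: upgrading $\cos^2\phi_k \norm{\nabla_\theta J^L_k}^2 \to 0$ into $\norm{\nabla_\theta J^L_k}^2 \to 0$, which needs $\cos\phi_k$ bounded away from zero, and here the approximate gradient is the difficulty, since the angle between $\nabla_\theta J^A_k$ and $\nabla_\theta J^L_k$ is only controlled through the ball of Lemma~\ref{lemma:orthog}. I would argue by contradiction: if $\norm{\nabla_\theta J^L_k} \not\to 0$, there is a subsequence on which $\norm{\nabla_\theta J^L_k} \ge M > 0$; on it the closeness $\norm{\omega^k - \omega^*} \le \delta$ forces $\norm{\nabla_\theta J^A_k - \nabla_\theta J^L_k}$ below the continuity bound $\epsilon$, and a Cauchy--Schwarz estimate then gives $\cos\phi_k \ge (M-\epsilon)/(M+\epsilon)$, so $\cos^2\phi_k\norm{\nabla_\theta J^L_k}^2$ stays bounded away from $0$ along the subsequence, contradicting summability. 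The delicate point, which I expect to be the crux, is that this lower bound on $\cos\phi_k$ degrades as $\norm{\nabla_\theta J^L_k}$ shrinks, so the ball radius $\delta$ must be small relative to $M$; making the quantifiers line up — a single fixed ball that simultaneously certifies a descent direction and a uniform angle bound along every non-vanishing subsequence — is where the argument must be handled with care.
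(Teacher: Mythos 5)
Your proposal follows essentially the same route as the paper's proof: a Zoutendijk-type argument under the Wolfe conditions and Lipschitz gradient (Assumption \ref{assumption:technical_convergence}), with the Lyapunov decrease \eqref{eq:layp-2} used to telescope function values across the changing objective, and Lemma \ref{lemma:orthog} supplying the descent property of $p^k = -\nabla_\theta J^A_k$. If anything, your handling of the final step is more careful than the paper's own: the paper simply asserts $\cos^2(\mu^k) \geq \gamma > 0$ for all $k$ from the per-iteration positivity of $\innerp{\nabla_\theta J^A_k, \nabla_\theta J^L_k}$, whereas your Cauchy--Schwarz/contradiction argument, and your explicit flagging of the quantifier issue (a fixed ball radius cannot uniformly bound the angle as $\norm{\nabla_\theta J^L_k}$ shrinks), identifies exactly the point the published proof glosses over.
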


Lemma \ref{thm:convergence} gives conditions under which the parameter optimization updates will converge to 0. This shows that under nominal conditions, when the system is stabilized, the parameter will converge to a critical point. Note that we can run multiple solves of \eqref{eq:mpc_qp} in between parameter optimizations and still converge, which will be made clear through the proof.

\begin{proof}
    From \eqref{eq:wolfe-2} and Assumption \ref{assumption:costs}, we have
    \begin{equation*}
        (\nabla_{\theta}J^L_k(\theta^{k+1}) - \nabla_{\theta}J^L_k(\theta^k))^Tp^k \geq (c_2 - 1)\nabla_{\theta}J^L_k(\theta^k)^Tp^k.
    \end{equation*}
    The Lipschitz condition gives us the following:
    \begin{equation*}
        (\nabla_{\theta}J^L_k(\theta^{k+1}) - \nabla_{\theta}J^L_k(\theta^k))^Tp^k \leq \alpha^k \Gamma \norm{p^k}^2,
    \end{equation*}
    where $\Gamma$ is the Lipschitz constant. Combining these two equations, we have
    \begin{equation*}
        \alpha_k \geq \frac{c_2 - 1}{\Gamma} \frac{(\nabla_{\theta}J^L_k(\theta^k)^Tp^k)^2}{\norm{p^k}^2}.
    \end{equation*}
    By \eqref{eq:wolfe-1}, we have
    \begin{equation}\label{eq:decrease_statement}
        J^L_{k}(\theta^{k+1}) \leq J^L_k(\theta^k) - c \cos^2(\mu^k) \norm{\nabla_{\theta}J^L_k(\theta^k)}^2,
    \end{equation}
    where $c = c_1(1 - c_2)/\Gamma$ and $\cos(\mu^k) = \frac{-\innerp{\nabla_\theta J^L_k(\theta^k), p^k}}{\norm{\nabla_{\theta}J_k^L} \norm{p^k}}$.
    
    Inequality \eqref{eq:decrease_statement} characterizes the decrease in cost for a single time step $k$. We further need to characterize the cost over time. By Assumption \ref{assumption:mpc_stability}, we know that $J^L$ is a Lyapunov function certifying stability of the system. Using \eqref{eq:layp-2}, we know that between time steps $J^L$ is non-increasing and thus $J^L_{k+1}(\theta^{k+1}) \leq J^L_{k}(\theta^{k+1})$. Note that we can perform any number of \eqref{eq:mpc_qp} solves between parameter updates here and arrive at the same inequality. Therefore,
    \begin{equation}
    \label{eq:cost_decrease}
        J^L_{k+1}(\theta^{k+1}) \leq J^L_k(\theta^k) - c \cos^2(\mu^k) \norm{\nabla_{\theta}J^L_k(\theta^k)}^2.
    \end{equation}
    Since \eqref{eq:cost_decrease} holds for all $k$, we get
    \begin{equation*}
        J^L_{k+1}(\theta^{k+1}) \leq J^L_0(\theta^0) - c \sum_{j = 0}^k \cos^2(\mu^j) \norm{\nabla_{\theta}J^L_j(\theta^j)}^2.
    \end{equation*}
    
    From the Lyapunov properties, $J^L$ is positive definite and thus bounded below, so $J^L_0(\theta_0) - J^L_{k+1}(\theta^{k+1}) < \eta$ for some $\eta \in \mathbb{R}$ and for all $k$. This implies that $\sum_{j = 0}^\infty \cos^2(\mu^j) \norm{\nabla_{\theta}J^L_j(\theta^j)}^2 < \infty$. 
    
    Now we can apply the result from Lemma \ref{lemma:orthog}. By assumption we satisfy $\norm{\omega_k - \omega_k^*} < \delta$ at each iteration and thus $\innerp{\nabla_{\theta}J^A_k, \nabla_{\theta}J^L_k} > 0$ for each $k$. From the definition of $p^k$, $\cos(\mu^k)$, and $\innerp{\nabla_{\theta}J^A_k, \nabla_{\theta}J^L_k} > 0$ it follows that $\cos^2(\mu^k) \geq \gamma > 0$ for all $k$. Combining this with $\sum_{j = 0}^\infty \cos^2(\mu^j) \norm{\nabla_{\theta}J^L_j(\theta^j)}^2 < \infty$ implies that $\norm{\nabla_{\theta}J^L_k(\theta^k)}^2 \rightarrow 0$ as $k \rightarrow \infty$.
\end{proof}

Leveraging Lemmas \ref{lemma:orthog} and \ref{thm:convergence}, we can determine how the high-level optimization affects the stability of the system. Define a connected set $\mathcal{B} \subseteq \mathcal{A}$ where $0 \in \mathcal{B}$.

\begin{theorem}
    \label{cor:stability}
    Let the conditions in Lemma \ref{thm:convergence} hold. If additionally, $\norm{\nabla_{\theta}J^L(x_0)} > 0 \; \forall x_0 \in \mathcal{B} \setminus {0}$, $\norm{\nabla_{\theta}J^L(x_0)} = 0$ at $x_0 = 0$, and the MPC subproblem is only stable in $\mathcal{B}$, then with the parameter update \eqref{eq:hl-step}, the system becomes asymptotically stable to the origin in the sense of \eqref{eq:asymp_stability} with region of attraction $\mathcal{B}$.
\end{theorem}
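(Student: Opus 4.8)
The plan is to use $J^L$ as a discrete-time Lyapunov function and to upgrade the mere stability guaranteed by Assumption \ref{assumption:mpc_stability} to asymptotic stability by exploiting the strict decrease produced by the parameter update. First I would establish stability in the sense of \eqref{eq:stability}: since $J^L$ is positive definite on $\mathcal{B}$ (positive on $\mathcal{B}\setminus\{0\}$ and zero at the origin by Assumption \ref{assumption:mpc_stability}), and by the combined decrease inequality \eqref{eq:cost_decrease} the sequence $J^L_k(\theta^k)$ is non-increasing along the joint state--parameter trajectory, the standard discrete-time Lyapunov stability argument applies. Concretely, for a given $\epsilon$ I would choose a sublevel set of $J^L$ contained in the $\epsilon$-ball and use the monotonicity of $J^L_k(\theta^k)$ to trap the trajectory inside it; this simultaneously shows that the iterates remain in $\mathcal{B}$, so that the hypotheses of Lemma \ref{thm:convergence} and the new gradient condition stay valid for every $k$.

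Next I would establish attractivity. The key observation is that \eqref{eq:cost_decrease} yields a \emph{strict} decrease whenever $\norm{\nabla_{\theta}J^L_k(\theta^k)} > 0$, because inside the ball of Lemma \ref{lemma:orthog} we have $\cos^2(\mu^k) \geq \gamma > 0$. Lemma \ref{thm:convergence} already gives $\norm{\nabla_{\theta}J^L_k(\theta^k)}^2 \to 0$, while the additional hypothesis of the theorem states that, viewed as a function of the initial condition, $\norm{\nabla_{\theta}J^L(x_0)}$ vanishes only at $x_0 = 0$ and is strictly positive on $\mathcal{B}\setminus\{0\}$. I would then argue by contradiction using continuity: if $x_0^k \not\to 0$, some subsequence stays bounded away from the origin inside $\mathcal{B}$, and by continuity of $x_0 \mapsto \norm{\nabla_{\theta}J^L(x_0)}$ together with its positivity away from $0$, the gradient norm would be bounded away from $0$ along that subsequence, contradicting $\norm{\nabla_{\theta}J^L_k(\theta^k)} \to 0$. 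Hence $\norm{s^k - s^*} \to 0$, i.e. \eqref{eq:asymp_stability} holds. Equivalently, this step can be cast as a discrete LaSalle argument, with the limit set contained in $\{x_0 : \norm{\nabla_{\theta}J^L(x_0)} = 0\} = \{0\}$.

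Finally, the region of attraction is exactly $\mathcal{B}$ because the stability step confines any trajectory starting in $\mathcal{B}$ to a sublevel set of $J^L$ inside $\mathcal{B}$, where the MPC subproblem is stabilizing and the gradient hypothesis holds; combined with attractivity this gives asymptotic stability to the origin with region of attraction $\mathcal{B}$.

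I expect the main obstacle to be the rigorous passage from $\norm{\nabla_{\theta}J^L_k(\theta^k)} \to 0$ to $x_0^k \to 0$. This hinges on (i) continuity of $x_0 \mapsto \norm{\nabla_{\theta}J^L(x_0)}$ together with enough compactness (e.g. that the iterates live in a compact sublevel set of $\mathcal{B}$), and (ii) the subtlety that the gradient is evaluated at the drifting parameter $\theta^k$ rather than a fixed $\theta$, so I would need the positivity in the hypothesis to hold uniformly over the bounded parameter iterates, or to restrict attention to a compact set of $\theta^k$. Securing forward invariance of the relevant sublevel set inside $\mathcal{B}$ --- so that the ball condition of Lemma \ref{lemma:orthog} and the gradient hypothesis remain applicable at every step --- is the other technical point that needs care.
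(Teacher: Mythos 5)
Your proposal is correct and rests on the same foundation as the paper's proof: $J^L$ as the Lyapunov function, the per-step decrease \eqref{eq:cost_decrease} produced by the Wolfe line search, and the combination of Lemma \ref{lemma:orthog} with the gradient hypothesis to make that decrease strictly positive on $\mathcal{B}\setminus\{0\}$. Where you diverge is the final step. The paper stops there: having shown $J^L_{k+1} - J^L_k < 0$ for all $x_0 \in \mathcal{B}\setminus\{0\}$, it declares that $J^L$ ``now certifies asymptotic stability,'' i.e., it invokes the standard strict-Lyapunov-function theorem as a one-liner. You instead derive attractivity explicitly: you take the conclusion of Lemma \ref{thm:convergence}, $\norm{\nabla_\theta J^L_k(\theta^k)} \to 0$, and combine it with continuity and the positivity of the gradient norm away from the origin to force $x_0^k \to 0$ by contradiction (a discrete LaSalle argument). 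This buys something real: pointwise strict decrease alone does not imply asymptotic stability in discrete time --- one needs the decrease to be bounded below by a continuous positive-definite function of the state, or equivalent uniformity and compactness, which is exactly the machinery you supply. The caveats you flag --- that the gradient is evaluated at the drifting parameter $\theta^k$, so positivity must hold uniformly over the parameter iterates, and that the trajectory must stay in a compact sublevel set of $\mathcal{B}$ so the contradiction closes --- are not defects of your proposal; they are gaps that the paper's one-line conclusion silently incurs as well. In short: same skeleton and same key lemmas, but your attractivity argument makes rigorous what the paper asserts.
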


\begin{proof}
    By assumption, $J^L$ is a Lyapunov function that certifies stability. It is thus positive definite and $J^L_{k+1} - J^L_k \leq 0$. From Lemma \ref{thm:convergence}, and specifically \eqref{eq:cost_decrease}, the cost function decreases by at least $c \cos^2(\mu^k) \norm{\nabla_{\theta}J^L_k(\theta^k)}^2$ with each time step. By Lemma \ref{lemma:orthog} and the assumption on the gradient, we get that $c \cos^2(\mu^k) \norm{\nabla_{\theta}J^L_k(\theta^k)}^2 > 0$ for all $x_0 \in \mathcal{B} \setminus {0}$. Therefore, as an immediate result of the parameter optimization with line search satisfying the Wolfe Conditions, we get that $J^L_{k+1} - J^L_k < 0 \; \forall x_0 \in \mathcal{B} \setminus {0}$. This allows $J^L$ to now certify asymptotic stability of the origin.
\end{proof}

Theorem \ref{cor:stability} shows that there exist scenarios in which adding the additional parameter optimization \eqref{eq:hl_opt} and line search promotes the stability certificate from stable to asymptotically stable.

\subsection{Extension for Constraints}
We describe how to extend the convergence result above to constraints in the high-level problem enforced with log barriers. The convergence proof follows directly from the proof of Lemma \ref{thm:convergence} with two modifications. First, we must show that even without perfect gradient knowledge, we can still obtain a descent direction from differentiating the QP. Second, we must show that the Wolfe Conditions can always be satisfied given a descent direction $p^k$, because the cost function now has a bounded domain.

\begin{assumption}
\label{assumption:barrier_cost}
    The high-level cost satisfies $H(x, u, \theta) = L(x, u, \theta) + B(\theta)$ where $B(\theta)$ enforces a log barrier on the inequality constraints encoded in $\mathcal{T} \subset \mathbb{R}^c$. Specifically,
    \begin{equation*}
        B(\theta) = \sum_{j = 0}^{n_c} -\ln(g_j(\theta)),
    \end{equation*}
    where $g(\theta) \geq 0 \Leftrightarrow \theta \in \mathcal{T}$ and $n_c$ represents the number of inequality constraints (with the inequality evaluated element-wise). Finally, let $\mathcal{T}$ define a compact set.
\end{assumption}

\begin{lemma}
    \label{lemma:grad_accurate}
    Let Assumption \ref{assumption:barrier_cost} hold. Further, assume $\norm{\nabla_{\theta} J^L(x_0^j, \theta^k)} \neq 0$. If $\nabla_{\theta}J^L(x_0^j, \theta^k)$ is continuous with respect to $\omega^* \in \Omega$, then there exists a ball of radius $\delta$ about $\omega^*$ in $\Omega$ such that if $\norm{\omega - \omega^*}^2 \leq \delta$ then $\innerp{\nabla_{\theta}H^A(x_0^j, \theta^k), \nabla_{\theta}H(x_0^j, \theta^k)} > 0$, where $H^A(x_0^j, \theta^k) = J^A(x_0^j, \theta^k) + B(\theta^k)$.
\end{lemma}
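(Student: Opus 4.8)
The plan is to mirror the proof of Lemma \ref{lemma:orthog}, exploiting the fact that the barrier term $B(\theta)$ is common to both the true and the approximate high-level costs and therefore cancels in the relevant difference. First I would write the embedded costs explicitly: under Assumption \ref{assumption:barrier_cost}, substituting the MPC minimizer into $H = L + B$ gives $H(x_0^j,\theta^k) = J^L(x_0^j,\theta^k) + B(\theta^k)$, while by definition $H^A(x_0^j,\theta^k) = J^A(x_0^j,\theta^k) + B(\theta^k)$. Since $B$ depends only on $\theta$ and not on the QP data $\omega$, taking gradients yields $\nabla_\theta H = \nabla_\theta J^L + \nabla_\theta B$ and $\nabla_\theta H^A = \nabla_\theta J^A + \nabla_\theta B$, so the barrier gradient drops out of the difference:
\[
\nabla_\theta H^A - \nabla_\theta H = \nabla_\theta J^A - \nabla_\theta J^L.
\]

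Next I would transfer the continuity hypothesis. Because $\nabla_\theta J^A$ (as a function of $\omega$) coincides with $\nabla_\theta J^L$ at $\omega^*$ and is continuous there, the identity above shows that for every $\epsilon > 0$ there is a $\delta > 0$ with $\|\omega - \omega^*\| < \delta \Rightarrow \|\nabla_\theta H^A - \nabla_\theta H\| < \epsilon$. I would then apply the same polarization identity used in Lemma \ref{lemma:orthog}, now written for $H$:
\[
\langle \nabla_\theta H^A, \nabla_\theta H\rangle = \tfrac{1}{2}\big( \|\nabla_\theta H^A\|^2 + \|\nabla_\theta H\|^2 - \|\nabla_\theta H^A - \nabla_\theta H\|^2 \big),
\]
and choose $\epsilon$ small enough that $\|\nabla_\theta H^A - \nabla_\theta H\|^2 < \|\nabla_\theta H\|^2 \le \|\nabla_\theta H^A\|^2 + \|\nabla_\theta H\|^2$, forcing the inner product to be strictly positive. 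Shrinking $\delta$ to the corresponding radius and squaring the threshold (to match the stated $\|\omega-\omega^*\|^2 \le \delta$) closes the argument.

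The main obstacle is establishing that the quantity playing the role of $\|\nabla_\theta J^L\|^2$ in Lemma \ref{lemma:orthog} — here the limiting value $\|\nabla_\theta H\|^2$ of the inner product as $\omega \to \omega^*$ — is strictly positive, since this is precisely what upgrades the bound from $\ge 0$ to a strict inequality. The hypothesis supplies $\|\nabla_\theta J^L\| \neq 0$, but $\nabla_\theta H = \nabla_\theta J^L + \nabla_\theta B$, so I must rule out the exact cancellation $\nabla_\theta J^L = -\nabla_\theta B$. I would handle this by restricting attention to points that are non-critical for the combined embedded cost $H$ (i.e. $\nabla_\theta H \neq 0$), which is exactly the regime in which a descent direction is needed and the lemma is later invoked; moreover, near the boundary of the compact set $\mathcal{T}$ the barrier gradient $\nabla_\theta B$ diverges and cannot be cancelled by the bounded $\nabla_\theta J^L$, so the potential cancellation can only occur in the interior, where it is nongeneric. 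Once $\|\nabla_\theta H\| > 0$ is secured, the remainder is the verbatim continuity-plus-polarization computation of Lemma \ref{lemma:orthog}.
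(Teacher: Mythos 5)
Your proof follows the same route as the paper's: decompose $\nabla_\theta H = \nabla_\theta J^L + \nabla_\theta B$ and $\nabla_\theta H^A = \nabla_\theta J^A + \nabla_\theta B$, observe that the barrier gradient is known exactly and cancels in the difference $\nabla_\theta H^A - \nabla_\theta H = \nabla_\theta J^A - \nabla_\theta J^L$, and finish by continuity of $\nabla_\theta J^L$ with respect to $\omega^*$ (the paper invokes ``continuity of the inner product'' where you spell out the polarization identity from Lemma \ref{lemma:orthog}; these are the same step). The place where you go beyond the paper is in fact the most valuable part of your write-up: the cancellation you flag is a genuine gap, and the paper's own proof steps over it. The paper argues that if $\nabla_\theta J^A_j = \nabla_\theta J^L_j$ then the required positivity holds automatically, but at that point the inner product equals $\norm{\nabla_\theta J^L_j + \nabla_\theta B}^2$, which is strictly positive only if $\nabla_\theta H_j \neq 0$; the stated hypothesis $\norm{\nabla_\theta J^L} \neq 0$ does not exclude $\nabla_\theta J^L = -\nabla_\theta B$, and at such a point the lemma's conclusion genuinely fails, since then $\innerp{\nabla_\theta H^A, \nabla_\theta H} = 0$ for every $\omega$. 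Your restriction to non-critical points of $H$ is therefore not a convenience but precisely the hypothesis the lemma needs, and it is harmless for the downstream use in Prop. \ref{thm:constraints}, where a descent direction is only required away from critical points of $H$. Your further remark that the blow-up of $\nabla_\theta B$ near the boundary of the compact set $\mathcal{T}$ confines any possible cancellation to the interior is correct but unnecessary once that non-criticality assumption is in place.
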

\begin{proof}
    The gradient of the high-level cost function with a specific parameter and initial condition is $\nabla_\theta H_j(\theta^k) = \nabla_\theta J^L_j(\theta^k) + \nabla_\theta B(\theta^k)$. We have perfect knowledge of $\nabla_\theta B(\theta^k)$ and require $\innerp{\nabla_\theta J^A_j(\theta^k) + \nabla_\theta B(\theta^k), \nabla_\theta J^L_j(\theta^k) + \nabla_\theta B(\theta^k)} > 0$.
    
    We know if $\nabla_\theta J_j^A(\theta^k) = \nabla_\theta J_j^L(\theta^k)$, then the above condition is satisfied. Therefore, by continuity of the inner product and continuity of $\nabla_\theta J^L_j$ with respect to $\omega^*$, there exists a ball in $\Omega$ such that $\nabla_\theta H^A(x_0^k, \theta^k)$ satisfies the desired property.
\end{proof}

\begin{remark}
    Note that the conditions in Lemma \ref{lemma:grad_accurate} and Lemma \ref{lemma:orthog} are different. Satisfying only the condition in Lemma \ref{lemma:orthog} does not imply satisfaction of Lemma \ref{lemma:grad_accurate} and thus to satisfy this condition, we may require a smaller ball.
\end{remark}

\begin{lemma}
    \label{lemma:wolfe_sat}
    Let Assumption \ref{assumption:barrier_cost} hold. There exists an $\alpha$ such that the Wolfe Conditions can always be satisfied for $H$ with a descent direction $p$.
\end{lemma}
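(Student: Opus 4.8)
The plan is to mirror the classical existence argument for a step length satisfying the Wolfe Conditions (e.g., Lemma 3.1 in \cite{nocedal_numerical_2006}), adapting it to the one feature that distinguishes our setting: the log barrier $B$ restricts the effective domain of $H$ to the strict interior of $\mathcal{T}$, so the candidate ray may exit the feasible region at a finite step length. Define the scalar line function $\phi(\alpha) := H_k(\theta^k + \alpha p^k)$. The three ingredients the argument needs are that $\phi$ is continuously differentiable on its feasible interval, that $\phi'(0) = \nabla_\theta H_k(\theta^k)^T p^k < 0$ (which holds by hypothesis, since $p^k$ is a descent direction), and a confinement property that forces the sufficient-decrease line to meet the graph of $\phi$.

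First I would pin down the feasible interval. Since the line search is initialized at an interior point, $g(\theta^k) > 0$, and continuity of $g$ gives a maximal $\bar\alpha > 0$ with $\theta^k + \alpha p^k$ strictly feasible for $\alpha \in [0, \bar\alpha)$; on this interval $\phi$ is $C^1$ because $L$ is smooth and each $-\ln(g_j)$ is smooth where $g_j > 0$. Because $\mathcal{T}$ is compact (Assumption \ref{assumption:barrier_cost}) and $p^k \neq 0$, the ray must leave $\mathcal{T}$ at a \emph{finite} $\bar\alpha$, and it can only do so by crossing $\{g_j = 0\}$ for some $j$; there $-\ln(g_j) \to +\infty$, so $\phi(\alpha) \to +\infty$ as $\alpha \to \bar\alpha$. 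This blow-up is exactly the confinement ingredient, and it is the point at which Assumption \ref{assumption:barrier_cost} does the essential work: it replaces the ``bounded below on $(0,\infty)$'' hypothesis of the textbook lemma.

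With these facts I would run the standard construction. The sufficient-decrease line $\ell(\alpha) := \phi(0) + c_1 \alpha\, \phi'(0)$ has slope $c_1 \phi'(0) < 0$, and since $0 < c_1 < 1$ with $\phi'(0) < 0$ we get $\ell'(0) = c_1 \phi'(0) > \phi'(0)$, so $\phi(\alpha) < \ell(\alpha)$ for small $\alpha > 0$ and the Armijo inequality \eqref{eq:wolfe-1} holds there. As $\phi \to +\infty$ while $\ell$ stays finite on $[0,\bar\alpha)$, the intermediate value theorem yields a smallest $\alpha' \in (0, \bar\alpha)$ with $\phi(\alpha') = \ell(\alpha')$, and $\phi < \ell$ on $(0, \alpha')$. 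Applying the mean value theorem to $\phi$ on $[0,\alpha']$ gives $\alpha'' \in (0, \alpha')$ with $\phi'(\alpha'') = (\phi(\alpha') - \phi(0))/\alpha' = c_1 \phi'(0)$. Since $c_2 > c_1$ and $\phi'(0) < 0$, this yields $\phi'(\alpha'') = c_1 \phi'(0) > c_2 \phi'(0)$, which is the curvature condition \eqref{eq:wolfe-2}; and because $\alpha'' < \alpha'$ the sufficient-decrease condition \eqref{eq:wolfe-1} holds at $\alpha''$ as well. Hence $\alpha^k = \alpha''$ satisfies both Wolfe Conditions.

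The main obstacle is precisely the bounded domain created by the barrier: the textbook lemma assumes $\phi$ is defined and bounded below on the entire ray, whereas here the step can leave $\mathcal{T}$ at finite $\bar\alpha$, beyond which $H$ is undefined. The resolution is that compactness of $\mathcal{T}$ forces this exit to occur through the barrier's singularity, so that $\phi \to +\infty$ confines the search; this guarantees the required crossing of the decreasing sufficient-decrease line strictly inside $[0,\bar\alpha)$, and the mean-value step can therefore be carried out without ever leaving the domain of $H$.
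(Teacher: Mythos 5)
Your proof is correct and follows essentially the same route as the paper's: Armijo holds for small steps since $p$ is a descent direction, compactness of $\mathcal{T}$ plus the log-barrier blow-up forces the graph of the line function to cross the sufficient-decrease line at some finite step, and the mean value theorem on the interval up to the (first) crossing produces an $\alpha''$ satisfying the curvature condition $c_1\phi'(0) > c_2\phi'(0)$ while retaining sufficient decrease. The only difference is cosmetic (the paper works with the gap function $\phi(\alpha) = H_k(\theta+\alpha p) - H_k(\theta) - \alpha c_1 \nabla_\theta H_k(\theta)^T p$ rather than the raw restriction), and your write-up is in fact slightly more careful about the feasible interval and about taking the smallest crossing point.
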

\begin{proof}
    Consider the function
    \begin{equation}
        \label{eq:wolfe_intersect}
        \phi(\alpha) = H_k(\theta + \alpha p) - H_k(\theta) - \alpha c_1 \nabla_\theta H_k(\theta)^T p. 
    \end{equation}
    Note that $\phi(0) = 0$. Since $c_1 < 1$ we get $\phi'(0) = (1 - c_1)\nabla_\theta H_k(\theta)^Tp < 0$ and therefore, $\exists \Tilde{\alpha} > 0$ such that $\phi(\tilde{\alpha}) < 0$. By the definition of a descent direction, $\alpha c_1 \nabla_\theta H_k(\theta)^T p < 0, \; \forall \alpha > 0$.

    Due to $\mathcal{T}$ being a compact set, there exists an $\alpha$ such that $\theta + \alpha p \notin \mathcal{T}$. Denote $\alpha'$ as the value such that $g(\theta + \alpha' p) = 0$. One can see that although $H_k(\theta) + \alpha c_1 \nabla_\theta H_k(\theta)^T p$ is bounded for all $\alpha \in [0, \alpha']$, we get that $\phi(\alpha) \rightarrow \infty$ as $\alpha \rightarrow \alpha'$ due to the log barrier approaching infinity, and thus $\exists \alpha^* \in (0, \alpha']$ such that $\phi(\alpha^*) = 0$. Then, as above, \eqref{eq:wolfe-1} is satisfied for all $\alpha \in (0, \alpha^*]$.

    Now, we can apply the mean value theorem as is done in classical proofs of Wolfe's Condition \cite{nocedal_numerical_2006} to get
    \begin{equation*}
        H_k(\theta + \alpha' p) - H_k(\theta) = \alpha' \nabla_\theta H_k(\theta + \alpha'' p)^Tp
    \end{equation*}
    for some $\alpha'' \in (0, \alpha^*)$. We can combine this with $\phi(\alpha^*) = 0$ to get
    \begin{equation*}
        \nabla_\theta H_k(\theta + \alpha'' p)^T p = c_1\nabla_\theta H_k(\theta)^T p > c_2\nabla_\theta H_k(\theta)^T p.
    \end{equation*}
    Thus, $\alpha''$ satisfies \eqref{eq:wolfe-1} and \eqref{eq:wolfe-2}.
\end{proof}
Here we present a convergence result for the constrained system.

\begin{proposition}
    \label{thm:constraints}
    Let Assumptions \ref{assumption:mpc_stability}, \ref{assumption:technical_convergence}, and \ref{assumption:barrier_cost} and the conditions in Lemma \ref{lemma:grad_accurate} and \ref{lemma:wolfe_sat} hold. Further, let $\mathcal{T} \subseteq \Theta$, $\theta^0 \in \mathcal{T}$, and $x_0^0 \in \mathcal{A}$. If $\omega^k$ is in the ball defined by Lemma \ref{lemma:grad_accurate} for all $k$, $p^k = -\nabla_\theta H_k^A(\theta^k)$ using the gradient from Prop. \ref{prop:qp_grad}, and $\alpha^k$ satisfies the Wolfe Conditions, then $\norm{\nabla_{\theta}H_k(\theta^k)}^2 \rightarrow 0$ as $k \rightarrow \infty$ and $\theta^k \in \mathcal{T}$ for all $k$.
\end{proposition}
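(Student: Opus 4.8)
The plan is to mirror the proof of Lemma~\ref{thm:convergence}, replacing the cost $J^L$ with the barrier-augmented cost $H_k(\theta) = J^L_k(\theta) + B(\theta)$ throughout, and to supply the two ingredients the unconstrained argument took for granted: a descent direction, now from Lemma~\ref{lemma:grad_accurate}, and an admissible step, now from Lemma~\ref{lemma:wolfe_sat}. First I would establish feasibility and confinement. By Lemma~\ref{lemma:wolfe_sat} an $\alpha^k$ satisfying \eqref{eq:wolfe-1}--\eqref{eq:wolfe-2} exists at every step, and its proof shows the resulting iterate keeps $H_k$ finite, hence $\theta^{k+1}$ lies in the interior of $\mathcal{T}$; starting from $\theta^0 \in \mathcal{T}$ this inductively yields $\theta^k \in \mathcal{T}$ for all $k$, which is the second claim. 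The Armijo condition \eqref{eq:wolfe-1} with the descent direction $p^k = -\nabla_\theta H^A_k(\theta^k)$ then forces $H_k(\theta^{k+1}) \leq H_k(\theta^k)$, and since $B$ depends only on $\theta$ it cancels in the time-step comparison, so \eqref{eq:layp-2} gives $H_{k+1}(\theta^{k+1}) = J^L_{k+1}(\theta^{k+1}) + B(\theta^{k+1}) \leq J^L_k(\theta^{k+1}) + B(\theta^{k+1}) = H_k(\theta^{k+1})$. Chaining these produces $H_{k+1}(\theta^{k+1}) \leq H_k(\theta^k)$, so inductively $H_k(\theta^k) \leq H_0(\theta^0)$.

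Next I would convert this confinement into a Lipschitz bound on $\nabla_\theta H_k$. Because $J^L \geq 0$, the previous bound gives $B(\theta^k) \leq H_0(\theta^0)$ uniformly in $k$, and since the log barrier diverges at $\partial\mathcal{T}$ this keeps every iterate in a fixed compact set $\mathcal{K} \subset \operatorname{int}(\mathcal{T})$. On $\mathcal{K}$ the barrier is smooth, so $\nabla_\theta B$ is Lipschitz, and combined with the Lipschitz continuity of $\nabla_\theta J^L_k$ from Assumption~\ref{assumption:technical_convergence} we obtain a constant $\Gamma$ for $\nabla_\theta H_k$. This step resolves the only genuine difficulty: the barrier gradient is not Lipschitz on all of $\mathcal{T}$, so a Lipschitz constant exists only after the iterates are shown to avoid the boundary. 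The confinement must therefore be obtained first, which is precisely why it relies on Lemma~\ref{lemma:wolfe_sat} rather than on the Lipschitz property itself, avoiding circularity.

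With $\Gamma$ in hand I would reproduce the derivation of \eqref{eq:decrease_statement} verbatim: combining \eqref{eq:wolfe-2}, the Lipschitz bound, and \eqref{eq:wolfe-1} gives $H_k(\theta^{k+1}) \leq H_k(\theta^k) - c\cos^2(\mu^k)\norm{\nabla_\theta H_k(\theta^k)}^2$ with $c = c_1(1-c_2)/\Gamma$ and $\cos(\mu^k)$ the cosine of the angle between $p^k$ and $\nabla_\theta H_k(\theta^k)$. Adjoining the Lyapunov time-step decrease exactly as for \eqref{eq:cost_decrease} and telescoping produces $\sum_{j=0}^\infty \cos^2(\mu^j)\norm{\nabla_\theta H_j(\theta^j)}^2 < \infty$, where boundedness below of $H$ follows from $J^L \geq 0$ together with $B$ being bounded below on the compact $\mathcal{T}$ (the continuous $g$ is bounded above there, so $-\ln g$ is bounded below). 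Finally, Lemma~\ref{lemma:grad_accurate} gives $\innerp{\nabla_\theta H^A_k, \nabla_\theta H_k} > 0$ for every $k$ since $\omega^k$ lies in the prescribed ball; with $p^k = -\nabla_\theta H^A_k$ this bounds $\cos^2(\mu^k) \geq \gamma > 0$ uniformly, and summability then forces $\norm{\nabla_\theta H_k(\theta^k)}^2 \to 0$, completing the argument.
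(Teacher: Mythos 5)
Your proof is correct and takes the same route as the paper, whose entire proof is the single line that the result ``follows directly from the proof of Lemma~\ref{thm:convergence}.'' In fact your write-up is more careful than the paper's: the confinement of the iterates to a compact subset of $\operatorname{int}(\mathcal{T})$ (via the barrier's divergence at the boundary together with the monotone decrease of $H$), which simultaneously delivers the feasibility claim $\theta^k \in \mathcal{T}$ and repairs the fact that Assumption~\ref{assumption:technical_convergence} alone does not give a Lipschitz gradient for the barrier-augmented cost, is precisely the detail the paper's one-line proof glosses over.
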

\begin{proof}
Follows directly from the proof of Lemma \ref{thm:convergence}.
\end{proof}
Prop. \ref{thm:constraints} gives conditions under which we can expect the bilevel optimization to converge when the high-level optimization has constraints enforced via a log barrier. This also lets us relax the condition that $\Theta = \mathbb{R}^c$ if $\mathcal{T}$ is contained in the set.

\section{Application to Gait Generation}
In this section, we describe how to implement our proposed bilevel optimization algorithm for a legged robot in the context of contact schedule generation.

Fundamental to this approach is the observation that each foot is in a binary contact state. Thus, for each foot we can consider the times at which the contact is made and broken, i.e., when the contact state changes. This allows us to parameterize the entire set of contact schedules through these timings. We therefore let the parameter $\theta$ represent the contact state change times (i.e., lift off and touch down) of each foot throughout the horizon. This assumes we know what the robot will contact; thus, given a single surface, such as the ground, our method works well. 

\subsection{High-Level Optimization}
Associated with foot $i$ are the lift off and touch down times $\{\theta_{i,j}\}_{j=1}^{C}$, where $C$ is a user-defined parameter specifying how many contact changes to optimize. The order of the contact changes is enforced via the polytopic constraint set
\begin{equation*}
    \mathcal{T} = \{ \theta \; | \; t \leq \theta_{i,j} \leq \theta_{i,j+1} - k_{\text{min}}, \; \forall i, j < C\}.
\end{equation*}
Here, $t$ is the current time and $k_{\text{min}} > 0$ gives the minimum time for any contact state. We also enforce $\theta_{i,C} \leq k_{\text{end}} + \theta_{i,1}$, where $k_{\text{end}}>0$ is a user-defined parameter constraining the maximum time for the last contact relative to the first one. If a contact change time is in the past, i.e., $t > \theta_{i,j}$, then it is no longer optimized. For any leg not in contact, we only optimize contact times after the next touch down to prevent requesting instantaneous changes in the foot location. New contact times are added into the optimization as needed to fill the horizon.

Rather than solving \eqref{eq:imp_fcn_partial}, we can instead solve for gradients of the cost with respect to $\omega$ directly using the methods of \cite{amos_optnet_2021}. This provides an alternative and equivalent formulation of the gradient. Then, instead of \eqref{eq:cost_partial} we can use the following, where $\odot$ represents element-wise multiplication:
\begin{multline}
\label{eq:cost_partial_reform}
    \frac{\partial J_k^A}{\partial \theta_j} = \sum_{i,j} \bigg(\frac{\partial J_k^A}{\partial Q} \odot \frac{\partial Q}{\partial \theta_j}\bigg)_{(i,j)} +
    \sum_{i,j} \bigg(\frac{\partial J_k^A}{\partial A} \odot \frac{\partial A}{\partial \theta_j}\bigg)_{(i,j)} \\ +
    \sum_{i,j} \bigg(\frac{\partial J_k^A}{\partial G} \odot \frac{\partial G}{\partial \theta_j}\bigg)_{(i,j)} + \bigg\langle\frac{\partial J_k^A}{\partial q}, \frac{\partial q}{\partial \theta_j}\bigg\rangle \\ + \bigg\langle\frac{\partial J_k^A}{\partial b}, \frac{\partial b}{\partial \theta_j}\bigg\rangle + \bigg\langle\frac{\partial J_k^A}{\partial h}, \frac{\partial h}{\partial \theta_j}\bigg\rangle.
\end{multline}
To compute \eqref{eq:cost_partial_reform}, we still need to perform a large matrix inversion as given in \eqref{eq:imp_fcn_partial}, which can be performed via a sparse LU decomposition. 

Once the gradient information is computed, we can solve the Linear Program (LP)
\begin{align}
    \label{eq:lp}
    \min_{p^k}& \quad \nabla_{\theta}H_k^A(\theta)^T p^k \\
        \text{s.t.}& \quad p^k + \theta^k \in \mathcal{T} \notag
\end{align}
for the step direction $p^k$ of the contact times. Finally, we can perform a parallel line search to determine the step length modification, $\alpha^*$. Note that the line search requires an MPC solve for each value of $\alpha$; thus, to minimize the computation time, we parallelize the MPC solves. We solve the problem with contact schedules $\theta + \alpha_i p$ for a set of $\alpha_i$'s where $\theta + \alpha_i p \in \mathcal{T}$ and use the trajectory with the least cost. Note that this can be viewed as an approximation of a backtracking line search. For backtracking line search, we can in general ignore the Wolfe curvature conditions \eqref{eq:wolfe-2} \cite{nocedal_numerical_2006}, so this is an approximation of the line search mentioned above.

Alg. \ref{alg:bilevel_gait} gives the entire bilevel algorithm. $k_\text{hl}$ allows us to set the frequency of the gait optimization by only performing it once every $k_\text{hl}$ MPC solves. $k_\text{start}$ denotes the number of MPC solves for the starting trajectory.

\begin{algorithm}
    \caption{Bilevel Gait Generation}\label{alg:bilevel_gait}
    \begin{algorithmic}
    \State Given: $H(x, u)$, $\theta^0$, $k_{\text{end}}$, $k_{\text{min}}$, $x^0$, $u^0$, $k_\text{hl} \geq 1$, $k_\text{start} > 0$.
    \State Initialize $k = 0$ with $k_\text{start}$ iterations of MPC$(x^k, u^k, \theta^k)$
    \Loop
    \State Receive state estimates and modify $x^k$
    \If{$(k \mod k_\text{hl}) = 0$}
    \State Compute $\frac{\partial A}{\partial \theta}$, $\frac{\partial G}{\partial \theta}$, $\frac{\partial b}{\partial \theta}$, $\frac{\partial h}{\partial \theta}, \frac{\partial Q}{\partial \theta}, \frac{\partial q}{\partial \theta}$
    \State Form the gradient $\nabla_{\theta} J^A$ with \eqref{eq:cost_partial_reform}
    \State Solve \eqref{eq:lp} for the descent direction $p$
    \State Use parallel line search to compute $\alpha^*$
    \State $x^{k+1}, \; u^{k+1} = $\; QP-MPC$(x^k, u^k, \theta^k + \alpha^* p)$
    \State $\theta^{k+1} = \alpha^*p + \theta^k$
    \Else
    \State $x^{k+1}, \; u^{k+1} = $\; QP-MPC$(x^k, u^k, \theta^k)$
    \EndIf
    \State Send $x^{k+1}$ and $u^{k+1}$ to robot
    \State $k = k + 1$
    \EndLoop
    \Procedure{QP-MPC}{$x$, $u$, $\theta$}
    \State Form $A$, $b$, $G$, $h$ through linearization about $x$ and $u$
    \State Form $Q$, as a quadratic approximation of $L(x,u,\theta)$ and $q$ as its linear approximation about $x$ and $u$
    \State Solve the QP given in \eqref{eq:mpc_qp}
    \EndProcedure
    \end{algorithmic}
\end{algorithm}

\subsection{Model}
A nonlinear Single Rigid Body (SRB) model of the quadruped is used \cite{di_carlo_dynamic_2018}. This model reduces the quadruped to a single mass with fixed inertia. Since we only consider the main body of the quadruped, we make the approximation that the contact dynamics do not enter the model. This is generally a fair assumption if the feet contact the ground at low speeds. The dynamics are given as
\begin{equation}
    \begin{bmatrix} \dot{r} \\ \dot{l} \\ \dot{\xi} \\ I_R\dot{\zeta} \end{bmatrix} = \begin{bmatrix}
        l/m \\
        \sum_{i=1}^{n_e}F_i - mg \\
        \zeta \\
        -\zeta \times I_R\zeta + \sum_{i=1}^{n_e} (e_i - r)\times f_i
    \end{bmatrix}
    \label{eq:dynamics}
\end{equation}
where $r$ is the center of mass, $l$ the linear momentum, $\xi$ the orientation, and $I_R\zeta$ the angular velocity inertia matrix product. $m$ gives the mass of the robot, $g$ is the acceleration due to gravity, $n_e$ denotes the number of legs, $F_i$ is the force applied by the $i^{\text{th}}$ leg, and $e_i$ is the location of the $i^{\text{th}}$ leg in the world frame.

The orientation $\xi$ is represented as a quaternion. The states $\begin{bmatrix}r & l & \xi & I_R\zeta\end{bmatrix}^T$ are in $\mathbb{R}^{9} \times \mathbb{H}$ and we work with the derivative on the Lie algebra of the space, i.e. $\mathbb{R}^{12}$. Details can be found in \cite{csomay-shanklin_nonlinear_2023}.

\subsection{Spline Parameterization}
From Prop. \ref{prop:qp_grad}, we need $\omega$ to be smooth with respect to $\theta$. To ensure this smoothness, a spline parameterization will be used. Specifically, all variables that depend on the contact times are parameterized as splines consisting of Hermite polynomials connected end to end. For the model described above, these splines parameterize both the ground reaction forces and positions of the legs. Each individual leg's positions and forces share the same set of contact times.

Any segment of the spline can be expressed as
\begin{align}
    \sigma(t) &= a_0 + a_1 t + a_2 t^2 + a_3 t^3,
    \label{eq:hermite}
\end{align}
where the coefficients are determined by the spline's initial and final values ($y_0$, $y_1$), derivatives ($\dot{y}_0$, $\dot{y}_1$), and total time $T_s$, where $t \in [0, T_s]$. In the MPC problem, we optimize $(y_0, y_1, \dot{y}_0, \dot{y}_1)$, and in the high-level problem we optimize $T_s$ (which is given by $\theta$). This gives the coefficients
\begin{align*}
    a_0 &= y_0, \\
    a_1 &= \dot{y}_0, \\
    a_2 &= -T_s^{-2}(3(y_0 - y_1) + T_s(2\dot{y}_0 + \dot{y}_1)), \\
    a_3 &= T_s^{-3}(2(y_0 - y_1) + T_s(\dot{y}_0 + \dot{y}_1)).
\end{align*}
With this choice of parameterization, we can naturally encode constraints that would otherwise need to be explicitly stated in the QP. For example, the constraint enforcing no force at a distance is easily encoded by setting that polynomial segment to 0. For more details regarding this parameterization, see \cite{winkler_gait_2018}.

\subsection{Constraints}
Since the SRB model has no notion of the kinematics of the robot's legs but the MPC generates contact locations, we need a way to constrain the contact locations. Thus we add a box constraint for the leg location relative to the center of mass of the robot at each time node.

To prevent the MPC from requesting very large changes in position suddenly, we also add an equality constraint that enforces that the next touchdown location cannot be changed. This can be tuned to only be enforced after the leg has completed some percentage of the swing.

Lastly, we also enforce two ground reaction force constraints: a box constraint on the $z$ dimension of the force and a friction cone constraint. Let $F_b$ denote the maximum allowable ground reaction force and $F^x_i$, $F^y_i$, $F^z_i$ the $x$, $y$, $z$ forces for a given leg. These constraints are written
\begin{align*}
    0 \leq F_i^z &\leq F_b \quad 0 \leq i \leq n_e, \\
    | F_i^x | &\leq \mu F_i^z \quad 0 \leq i \leq n_e, \\
    | F_i^y | &\leq \mu F_i^z \quad 0 \leq i \leq n_e.
\end{align*}

\section{Simulation Results}
\begin{figure*}
\vspace{7pt}
    \centering
    \includegraphics[scale=0.28]{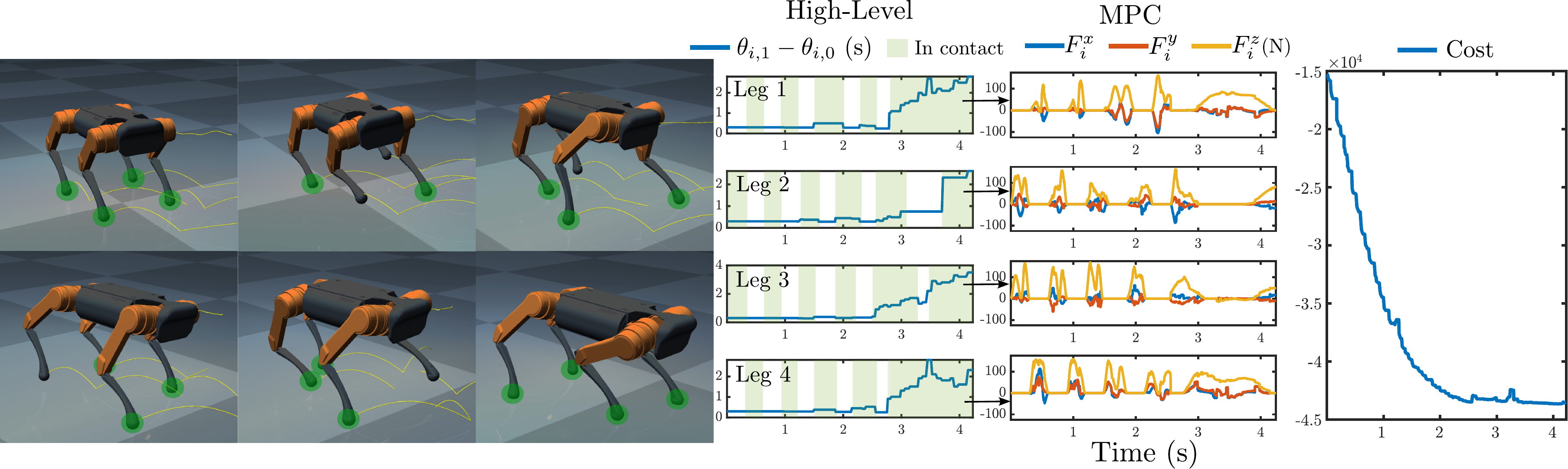}
\vspace{-6pt}
    \caption{Demonstration of diagonal walking. We observe that throughout the motion the robot achieves a number of different contact sequences. The green circles denote feet in contact with the ground. We start in a diagonal trot gait (diagonal legs move together) and by the end, the trajectory ``disappears" as the high-level optimizer adjusts the parameter such that its legs no longer move. The graph on the left shows the planned contact state times in blue and actual contact states in green. The middle graph plots MPC ground reaction forces which depend directly on the times computed in the high-level optimization on the left. The right-most graph shows the cost function over time.}
    \label{fig:gait-tile}
\end{figure*}

Alg. \ref{alg:bilevel_gait} was implemented in C++ and the code is publicly available online \footnote{\url{https://github.com/Zolkin1/bilevel-gait-gen}}. The QP in the MPC was solved with Clarabel \cite{goulart_clarabel_nodate} with solve tolerances of $1 \times 10^{-15}$, and the LP was solved with OSQP \cite{stellato_osqp_2020}. All derivatives were implemented analytically. The MPC cost was a quadratic function, but note that the full MPC problem is still a nonlinear program (due to the nonlinear constraints) and thus still must be approximated as a QP.

All results are simulated in MuJoCo \cite{todorov_mujoco_2012}. The simulation runs in real-time in a separate thread from the MPC to accurately reflect real-world behavior. A low-level QP controller that weights center of mass tracking, joint leg tracking (recovered via inverse kinematics), and ground reaction force tracking is used. This QP controller also enforces friction cone, maximum force, and holonomic constraints. The outputs of this controller are generalized accelerations and new desired ground reaction forces, both of which are then used to compute joint torques via inverse dynamics. This controller runs at 1 kHz.

The proposed MPC algorithm allows one to naturally account for early touchdowns with the feet. When a new touchdown is sensed we can update the internal contact schedule of MPC to note the contact.

\textbf{Computation Times}. Table \ref{tab:computation_times} gives the computation times for the implementation. The gradient column refers to the computation time required to create and factor the matrix in \eqref{eq:imp_fcn_partial}. The line search was computed using 10 parallel threads to check 10 evenly spaced values of $\alpha$ between 0 and 1. Unsurprisingly, using more nodes leads to a significant increase in computation time, although even at 50 nodes the algorithm has potential to run in real-time. We compare these times to other algorithms that can generate entire contact schedules on quadrupedal robots, such as \cite{kurtz_inverse_2023} at $60$ Hz (with 20 nodes), \cite{cleach_fast_2023} at $100$ Hz (with 3 nodes), and \cite{neunert_whole-body_2018} at 190 Hz (125 nodes). If we optimize the gait once every 5 MPC iterations (as done in Fig. \ref{fig:gait-tile}), then the average computation time with 20 nodes is about 90 Hz. One can see that the timings presented in this work are comparable to or faster than other algorithms.
\begin{table}
    \caption{Average computation times achieved for forward walking}
    \begin{center}
    \begin{tabular}{|c | c | c | c | c|} 
     \hline
     Nodes & $dt$ (s) & MPC (ms) & Gradient (ms) & Line Search (ms) \\  
     \hline
     20 & 0.05 & 8.0 & 3.8 & 19.5 \\ 
     33 & 0.033 & 13.9 & 7.0 & 35.9 \\
     50 & 0.02 & 25.8 & 14.1 & 72.8 \\
     \hline
    \end{tabular}
    \end{center}
    \label{tab:computation_times}
    \vspace{-0.5cm}
\end{table}

\textbf{Generation of New Contact Schedules}. Since the robot can generate its own contact schedules, when using a cost function incentivizing the robot to walk, we observe new gaits used throughout the motion to create a very natural result. Fig. \ref{fig:gait-tile} shows snapshots of the robot and data from both levels of the optimization throughout a diagonal walking motion.

\textbf{Disturbance Rejection}. We present a simulation study where the target behavior is standing still, but the robot is subject to an external force. Table \ref{tab:disturbance_rejection} shows the results of the trials. External forces were applied for 0.3 seconds. When applied in only one direction, the force was varied between 30 N to 50 N. When applied in both directions, the force was varied between 25 N to 45 N. The MPC without gait optimization only recovers from this disturbance in 9 of the 15 test cases. With the gait optimization, 13 of the 15 disturbances are successfully rejected, thus showing improvement over the generic MPC algorithm. Fig. \ref{fig:disturbance} shows snapshots of an example disturbance rejection.

\begin{table}
\caption{Disturbance Rejection Recovery}
\begin{center}
\begin{tabular}{| c | c | c |} 
    \hline
    Direction & High-Level Opt. & Without High-Level Opt.  \\  
    \hline
    $x$ & $5/5$ & $5/5$ \\
    $y$ & $5/5$ & $3/5$ \\
    $x$ and $y$ & $3/5$ & $1/5$ \\
    \hline
\end{tabular}
\end{center}
\label{tab:disturbance_rejection}
\vspace{-0.5cm}
\end{table}

\textbf{Optimality}. Table \ref{tab:costs} reports the decrease in the average cost of the motion relative to MPC without gait optimization. We can see that in all four of the test scenarios, an average cost decrease was achieved. This demonstrates the capability of new contact schedules to help lower desired costs.

Although the individual optimization problems are easier to solve than those present in many CIMPC approaches, this approach is still prone to landing in local minima. Thus, we observe that tuning the cost function can lead to significantly different gaits being generated. Similar to results in \cite{katayama_whole-body_2022}, we have also found that contact times are often (but not always) pushed out. Thus if one wants to avoid this behavior, one could add an additional constraint to the LP and/or modify the the cost in the gait optimization.

\begin{figure}
\vspace{6pt}
    \centering
    \includegraphics[scale=1]{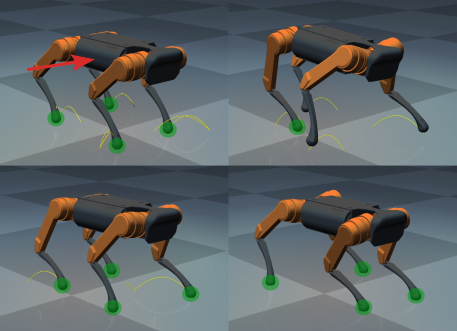}
    \vspace{-3pt}
    \caption{Snapshots of the robot while recovering from a disturbance in the $y$ direction, which is signified by the red arrow in the first tile. Observe in the upper right image that post-disturbance, three feet are off the ground while the original gait specified only two feet in the air at once. At the end, the robot's legs stop moving as one might naturally expect.}
    \label{fig:disturbance}
\end{figure}

\section{Conclusion}
In this paper, we have presented a real-time iteration scheme for a bilevel optimization with a low-level MPC that is suitable for use in real-world applications. Theoretical properties such as convergence and improvements in stability under nominal conditions were derived, justifying the approximations made by our proposed algorithm. Further, we proved that the use of our bilevel optimization algorithm can promote the stability of the system from stability in the sense of Lyapunov to asymptotic stability. We implemented the algorithm for contact schedule generation on a quadrupedal robot in simulation, demonstrating improved disturbance rejection, gait optimality, and generation of diverse gaits all while retaining real-time performance. Lastly, these results were achieved with computational speeds competitive with state of the art CIMPC algorithms. 

\begin{table}
    \caption{Cost Decrease Results}
    \begin{center}
    \begin{tabular}{| c | c |} 
     \hline
     Target Position & Avg. Cost Decrease  \\  
     \hline
     $(0.5, 0)$ & 3.2\% \\ 
     $(0.5, 0.5)$ & 2.32\% \\
     $(0, 0.5)$ & 5.6\% \\
     $(0, 0)$ & 6.44\% \\
     \hline
    \end{tabular}
    \end{center}
    \label{tab:costs}
    \vspace{-0.5cm}
\end{table}

\vspace{-1pt}
\bibliographystyle{IEEEtran}
\bibliography{BilevelGaitGeneration}

%\arxiv{\newpage
%\appendix
%\input{appendix}}{ }

\end{document}